\pgfplotsset{compat=1.16}
\crefname{equation}{Equation}{Equations}
\crefname{section}{Section}{Sections}
\crefname{table}{Table}{Tables}
\crefname{figure}{Figure}{Figures}
\crefname{algorithm}{Algorithm}{Algorithms}
\Crefname{table}{Table}{Tables}
\Crefname{figure}{Figure}{Figures}
\Crefname{algorithm}{Algorithm}{Algorithms}
\crefname{ineq}{Inequality}{Inequalities}
\Crefname{ineq}{Inequality}{Inequalities}
\crefname{prob}{Problem}{Problems}
\Crefname{prob}{Problem}{Problems}
\crefname{lem}{Lemma}{Lemmas}
\Crefname{lem}{Lemma}{Lemmas}
\newtheorem{definition}{Definition}
\newtheorem{theorem}{Theorem}
\newtheorem{lemma}[theorem]{Lemma}
\DeclarePairedDelimiter\set{\{}{\}}
\newcommand{\diag}{\text{diag}}
\newcommand{\cC}{\mathcal{C}}
\newcommand{\cE}{\mathcal{E}}
\newcommand{\cG}{\mathcal{G}}
\newcommand{\cH}{\mathcal{H}}
\newcommand{\cL}{\mathcal{L}}
\newcommand{\cO}{\mathcal{O}}
\newcommand{\cQ}{\mathcal{Q}}
\newcommand{\cV}{\mathcal{V}}
\newcommand{\cZ}{\mathcal{Z}}
\newcommand{\bbA}{\mathbb{A}}
\newcommand{\bbN}{\mathbb{N}}
\newcommand{\bbP}{\mathbb{P}}
\newcommand{\bbQ}{\mathbb{Q}}
\newcommand{\bbR}{\mathbb{R}}
\newcommand{\bbS}{\mathbb{S}}
\newcommand{\bbX}{\mathbb{X}}
\newcommand{\bzero}{\mathbf{0}}
\newcommand{\bA}{\mathbf{A}}
\newcommand{\bB}{\mathbf{B}}
\newcommand{\bC}{\mathbf{C}}
\newcommand{\bD}{\mathbf{D}}
\newcommand{\bE}{\mathbf{E}}
\newcommand{\bG}{\mathbf{G}}
\newcommand{\bH}{\mathbf{H}}
\newcommand{\bI}{\mathbf{I}}
\newcommand{\bJ}{\mathbf{J}}
\newcommand{\bK}{\mathbf{K}}
\newcommand{\bP}{\mathbf{P}}
\newcommand{\bQ}{\mathbf{Q}}
\newcommand{\bR}{\mathbf{R}}
\newcommand{\bS}{\mathbf{S}}
\newcommand{\bT}{\mathbf{T}}
\newcommand{\bU}{\mathbf{U}}
\newcommand{\bV}{\mathbf{V}}
\newcommand{\bW}{\mathbf{W}}
\newcommand{\bX}{\mathbf{X}}
\newcommand{\bY}{\mathbf{Y}}
\newcommand{\bZ}{\mathbf{Z}}
\newcommand{\be}{\mathbf{e}}
\newcommand{\bu}{\mathbf{u}}
\newcommand{\bx}{\mathbf{x}}
\newcommand{\by}{\mathbf{y}}
\newcommand{\barbJ}{\overline{\bJ}}
\newcommand{\barbQ}{\overline{\bQ}}
\newcommand{\barbX}{\overline{\bX}}
\newcommand{\barbY}{\overline{\bY}}
\newcommand{\barbbQ}{\overline{\bbQ}}
\newcommand{\barcQ}{\overline{\cQ}}
\newcommand{\hatbJ}{\widehat{\bJ}}
\newcommand{\hatbQ}{\widehat{\bQ}}
\newcommand{\hatbX}{\widehat{\bX}}
\newcommand{\hatbbQ}{\widehat{\bbQ}}
\newcommand{\tilbJ}{\widetilde{\bJ}}
\newcommand{\tilbbQ}{\widetilde{\bbQ}}
\newcommand{\thh}{\mathrm{th}}
\newacronym{vndt}{VNDT}{Vidyasagar's Network Dissipativity Theorem}
\newacronym{kyp}{KYP Lemma}{Kalman-Yakubovixh-Popov Lemma}
\newacronym{io}{IO}{input-output}
\newacronym{lti}{LTI}{linear time-invariant}
\newacronym{lmi}{LMI}{linear matrix inequality}
\newacronym{admm}{ADMM}{alternating direction methods of multipliers}
\newacronym{sdp}{SDP}{semidefinite programming}
\newacronym{uav}{UAV}{unmanned aerial vehicle}
\title{\LARGE \bf
Consensus-Based Stability Analysis of Multi-Agent Networks
}
\author{Ingyu Jang$^{1}$, Ethan J. LoCicero$^{2}$ and Leila Bridgeman$^{1}$
\thanks{*This work is supported by ONR Grant No. N00014-23-1-2043.}
\thanks{$^{1}$Ingyu Jang (PhD Student), and Leila Bridgeman (Assistant Professor) are with the Department of Mechanical Engineering and Material Science, Duke University, Durham, NC 27708 USA
        (email: {\tt\small ij40@duke.edu; ljb48@duke.edu}, phone: 919-225-4215)}%
\thanks{$^{2}$Ethan J. LoCicero (Fellow) is with ARPA-E, Washington, DC 20585 USA
        (email: {\tt\small ethanlocicero@gmail.com})}%
}
\begin{document}

\maketitle
\thispagestyle{empty}
\pagestyle{empty}

\begin{abstract}

The emergence of large-scale multi-agent systems has led to controller synthesis methods for sparse communication between agents. However, most sparse controller synthesis algorithms remain centralized, requiring information exchange and high computational costs. This underscores the need for distributed algorithms that design controllers using only local dynamics information from each agent. This paper presents a consensus-based distributed stability analysis. The proposed stability analysis algorithms leverage Vidyasagar's Network Dissipativity Theorem and the alternating direction methods of multipliers to perform general stability analysis. Numerical examples involving a 2D swarm of unmanned aerial vehicles demonstrate the convergence of the proposed algorithms.

\end{abstract}

\section{INTRODUCTION}

Recently, large-scale multi-agent networked systems have emerged in practical fields, driven by rapid advancements in digital communication and sensing technologies \cite{sztipanovits2011toward}. In response to these trends, controller synthesis methods have been developed to design sparse communication between agents, resolving scalability challenges \cite{jovanovic2016controller}. However, despite the sparse interconnections, the stability analysis and controller synthesis of such sparsity-promoting algorithms remain centralized. Centralized approaches are often impractical due to intellectual property concerns \cite{isik2023impact} and the high computational costs related to larger networks. This paper achieves consensus-based distributed stability analysis, where agents share only coarse dissipativity information, using decentralized methods to verify \gls{vndt} \cite{vidyasagar1981input}. This work provides a foundation for future work in distributed sparse controller network synthesis for nonlinear systems.

Dissipativity \cite{willems1972dissipative,hill1977stability} is a more general \gls{io} stability result than many \gls{io} theorems, such as the Passivity \cite{hatanaka2015passivity}, Small Gain \cite{zames1966input1}, and Conic Sector Theorems \cite{zames1966input1}. These \gls{io} methods model systems as input-output mappings, without consideration for the internal mechanisms in between. \gls{vndt} incorporates the dissipative properties of individual agents to analyze the stability of multi-agent systems. However, using this theorem directly still results in a centralized stability analysis, so further work is needed to avert the limitations of centralized algorithms.

Distributed optimization principles can be applied to \gls{io} stability, enabling localized system analysis. While extensively studied in parallel and distributed computation \cite{bertsekas2015parallel,tsitsiklis1986distributed}, distributed optimization has attracted interest in networked systems due to demands such as security and computational efficiency \cite{nedich2015convergence,nedic2018network}. A recent survey \cite{yang2019survey} highlights significant progress in distributed algorithms, but most of the research addresses unconstrained problems, which precludes directly imposing constraints that ensure network-wide stability. Although some algorithms address global, network-wide, constraints \cite{liu2017constrained}, they rely on specific network structures, such as undirected graphs. The \gls{admm}, in contrast, is well-suited for distributed algorithms with global constraints \cite{boyd2011distributed}.

In \cite{agarwal2020distributed}, a distributed dissipativity analysis was proposed based on an iterative Schur complement, but the approach is limited to \gls{lti} agents due to its reliance on the \gls{kyp} \cite{gupta1996robust}. In addition, the stability analysis in this approach assumes fixed dissipative properties, which may limit its applicability. To address these limitations, this work introduces a more general framework which can be modified for the stability analysis of systems with nonlinear agents such as \cite{li2002delay, mahmoud2009dissipativity, haddad2022dissipativity}, while enabling the use of flexible dissipativity parameters beyond conventional passivity and $\cL_2$-gain. In the proposed method, each agent independently analyzes its own dissipativity, shares its identified dissipativity parameters with its neighbors, and iteratively optimizes its dissipativity parameters to ensure the stability of the network.

\section{PRELIMINARIES} \label{chap:Preliminaries}

\subsection{Notation}
The sets of real numbers, natural numbers, and natural numbers up to $n$ are denoted $\bbR$, $\bbN$, $\bbN_n$, respectively. Bold uppercase letters, $\bA$, represent matrices, lowercase, $\bx$, represent vectors, and simple letters, $x$, represent scalars. The notation $\bA\prec0$ indicates that $\bA$ is negative-definite. The set of real $n\times m$ matrices is $\bbR^{n\times m}$, and the $(i,j)^\thh$ block or element of a matrix $\bA$ is denoted $(\bA)_{ij}$. If $(\bA)_{ij}\in\bbR^{n_i\times m_j}$ and $\bA\in\bbR^{\sum_{i=1}^N n_i \times \sum_{j=1}^M m_j}$, then $(\bA)_{ij}$ is said to be a ``block" of $\bA$, and $\bA$ is said to be in $\bbR^{N\times M}$ block-wise. The block diagonal matrix of $\bA_i$ for all $i\in\bbX$ is denoted $\diag(\bA_i)_{i\in{\bbX}}$. The cardinality of a set $\bbA$ is denoted by $|\bbA|$. The set of $n\times n$ symmetric matrices is denoted by $\bbS^n$. The $n\times n$ identity matrix is denoted by $\bI_n$. The set of square integrable functions is $\cL_{2}$. The Frobenius norm and $\cL_2$ norm are denoted by $\|\cdot\|_F$ and $\|\cdot\|_2$, respectively.  The truncation of a function $\by(t)$ at $T$ is denoted by $\by_T(t)$, where $\by_T(t)=\by(t)$ if $t\leq T$, and $\by_T(t)=0$ otherwise. If $\|\by_T\|_2^2{=}\langle\by_T,\by_T\rangle{=}\int_0^{\infty}\by_T^T(t)\by_T(t)dt{<}\infty$ for all $T{\geq}0$, then $\by{\in}\cL_{2e}$, where $\cL_{2e}$ is the extended $\cL_2$ space. The indicator function is denoted by $I_\bbA:\Omega\to\set{0,1}$,where $I_\bbA(x)=0$ if $x\in\bbA$ and $I_\bbA(x)=1$ otherwise for all $x\in\Omega$.

\subsection{Graph Structure and Chordal Decomposition}
A graph, $\cG(\cV(\cG),\cE(\cG))$, is defined by its vertices, $\cV(\cG)=\bbN_N$, and edges, $\cE(\cG)\subseteq\cV(\cG)\times\cV(\cG)$. It is undirected if $(i,j)\in\cE(\cG)$ implies $(j,i)\in\cE(\cG)$; otherwise, it is directed. A clique, $C\subseteq\cV$, is a set of $i,j\in C$ satisfying $i\neq j$, $(i,j)\in\cE(\cG)$. It is maximal if it is not a subset of another clique. A cycle of length $\alpha$ is a set of pairwise distinct vertices $\bbN_{\alpha}\subseteq\cV(\cG)$ such that $(\alpha,1)\in\cE(\cG)$ and $(i,i+1)\in\cE(\cG)$ for $i\in\bbN_{\alpha-1}$. A chord is an edge connecting non-consecutive vertices within a cycle. A chordal graph is an undirected graph in which every cycle of length greater than three contains a chord.

Set operations can be applied to graphs \cite{diestel2024graph}. For two graphs $\cG$ and $\cG'$, $\cG\cap\cG'=\big(\cV(\cG)\cap\cV(\cG'),\cE(\cG)\cap\cE(\cG')\big)$. If $\cG\cap\cG'=\varnothing$, then $\cG$ and $\cG'$ are disjoint. If $\cV(\cG')\subseteq\cV(\cG)$ and $\cE(\cG')\subseteq\cE(\cG)$, then $\cG'$ is a subgraph of $\cG$.

Graphs are used throughout this work to represent the structure of matrices. When this is done, both will share the same letter, but the graph will be in script letters, e.g. graph $\cG$ indicates the (block-wise) structure of matrix $\bG$, meaning the element (block) $(\bG)_{ij} \neq \textbf{0}$ if and only if $(i,j)\in\cE(\cG)$. Let $\bbS_-^n(\cE(\cG),0){=}\set*{\bG\in\bbS^n\,|\,\bG\preceq0,(\bG)_{ij}{=}\textbf{0}\text{ if }(i,j)\notin\cE(\cG)}$ be the set of negative semi-definite matrices structured according to $\cG$. The following theorem is useful for decomposing negative semi-definite block matrices.
\begin{theorem} [Chordal Block-Decomposition \cite{zheng2021chordal}] \label{thm:Chordal Decomposition}
    Let $\cZ$ be a chordal graph with maximal cliques $\set{C_i}_{i=1}^M$. Then $\bZ\in\bbS_{-}^N(\cE(\cZ),0)$ (block-wise) if and only if there exist $\bZ_p\in\bbS_-^{|C_p|}$ (block-wise) for $p\in\bbN_M$ such that
    \small
    \begin{align} \label{eq:Chordal Decomposition}
        \bZ = \sum_{p=1}^M\bE_{C_p}^T\bZ_p\bE_{C_p},
    \end{align}
    \normalsize
    where $\bE_{C_p}\in\bbR^{|C_p|\times N}$ (block-wise) is defined as $(\bE_{C_p})_{ij}=\bI$ if $C_p(i)=j$ and $(\bE_{C_p})_{ij}=\textbf{0}$ otherwise, 
    and $C_p(i)$ is the $i^\thh$ vertex of $C_p$.
\end{theorem}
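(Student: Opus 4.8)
\emph{Proof sketch.} The plan is to handle the two implications separately: the reverse one by inspection, and the forward one by induction along a clique tree, after reducing to the positive semi-definite case. For the reverse implication, note that for each $p$ one has $\bx^T\bE_{C_p}^T\bZ_p\bE_{C_p}\bx=(\bE_{C_p}\bx)^T\bZ_p(\bE_{C_p}\bx)\le0$, so every summand $\bE_{C_p}^T\bZ_p\bE_{C_p}$ is negative semi-definite; moreover its $(i,j)$ block vanishes unless $i,j\in C_p$, and $i\neq j$ with $i,j\in C_p$ forces $(i,j)\in\cE(\cZ)$, so the sum has block support inside $\cE(\cZ)$ together with the diagonal. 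Hence the sum lies in $\bbS_-^N(\cE(\cZ),0)$.

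For the forward implication I would first replace $\bZ$ by $-\bZ$, reducing to showing that a positive semi-definite $\bZ$ structured by $\cZ$ admits the decomposition with each $\bZ_p\succeq0$, and then induct on the number $M$ of maximal cliques. Using chordality, order the cliques $C_1,\dots,C_M$ so that the running intersection property holds, i.e.\ $C_M$ is a leaf of a clique tree whose separator $S:=C_M\cap(C_1\cup\cdots\cup C_{M-1})$ is contained in some $C_j$, $j<M$; the case $M=1$ is trivial. For the inductive step, set $U:=C_M\setminus S$ (the vertices private to $C_M$) and $R:=\cV(\cZ)\setminus C_M$. A vertex of $U$ lies in no maximal clique other than $C_M$, so there are no $\cZ$-edges between $U$ and $R$, and after permuting blocks the only nonzero blocks of $\bZ$ with a row or column index in $U$ are $\bZ_{UU}$, $\bZ_{US}$, and $\bZ_{SU}=\bZ_{US}^T$. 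I would then peel off the leaf via the generalized Schur complement of the symmetric positive semi-definite block $\bZ_{UU}$, writing $\bZ=\bE_{C_M}^T\bZ_M\bE_{C_M}+\bZ'$ with
\[
\bZ_M:=\begin{pmatrix}\bZ_{UU}&\bZ_{US}\\\bZ_{SU}&\bZ_{SU}\bZ_{UU}^{\dagger}\bZ_{US}\end{pmatrix},
\]
where $\bZ_{UU}^{\dagger}$ is the Moore--Penrose pseudoinverse. Because $\bZ\succeq0$ gives the range condition $\bZ_{UU}\bZ_{UU}^{\dagger}\bZ_{US}=\bZ_{US}$, the matrix $\bZ_M$ equals the congruence $\begin{pmatrix}\bI&\bZ_{UU}^{\dagger}\bZ_{US}\end{pmatrix}^T\bZ_{UU}\begin{pmatrix}\bI&\bZ_{UU}^{\dagger}\bZ_{US}\end{pmatrix}$ and is therefore positive semi-definite of block size $|C_M|$, while $\bZ'$ is $\bZ$ with its $U$-rows and $U$-columns zeroed and $\bZ_{SS}$ replaced by $\bZ_{SS}-\bZ_{SU}\bZ_{UU}^{\dagger}\bZ_{US}$. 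The block support of $\bZ'$ lies in $S\cup R$, i.e.\ in the subgraph $\cZ'$ of $\cZ$ induced by deleting $U$, and the restriction of $\bZ'$ to $S\cup R$ is exactly the Schur complement $\bZ/\bZ_{UU}$, which is positive semi-definite since $\bZ\succeq0$. Finally, $\cZ'$ is chordal (induced subgraphs of chordal graphs are chordal), and since $C_M$ was a leaf with $S\subseteq C_j$, deleting $U$ removes $C_M$ while leaving $C_1,\dots,C_{M-1}$ as the maximal cliques of $\cZ'$ and creates no new ones; the induction hypothesis then gives $\bZ'=\sum_{p=1}^{M-1}\bE_{C_p}^T\bZ_p\bE_{C_p}$ with $\bZ_p\succeq0$, and adding the leaf term $\bE_{C_M}^T\bZ_M\bE_{C_M}$ completes the decomposition. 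Undoing the sign change returns the stated negative semi-definite form.

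The hard part will be the bookkeeping when the private block $\bZ_{UU}$ is singular: one must work with the pseudoinverse and rely on positive semi-definiteness of the full matrix to supply both the range condition that makes $\bE_{C_M}^T\bZ_M\bE_{C_M}$ reproduce the blocks $\bZ_{UU},\bZ_{US},\bZ_{SU}$ exactly and the positive semi-definiteness of $\bZ/\bZ_{UU}$. The graph-theoretic ingredients — existence of a clique tree with the running intersection property, and invariance of the maximal-clique family under deletion of a leaf's private vertices — are standard for chordal graphs and I would cite them rather than reprove them. As an alternative route, one can blow $\cZ$ up into the scalar graph in which block $i$ becomes a clique on $n_i$ scalar vertices (again chordal, with maximal cliques the blow-ups of the $C_p$), invoke the classical scalar chordal decomposition result, and regroup the scalar summands block-wise.
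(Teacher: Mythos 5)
The paper does not actually prove this theorem; it is imported verbatim from \cite{zheng2021chordal}, so there is no in-paper argument to compare against. Your sketch is correct and is essentially the standard proof of the Agler--Helton--McCullough--Rodman / Griewank--Toint decomposition used in that reference: the easy direction by inspection of each summand's sign and block support, and the hard direction by flipping the sign to the positive semi-definite case and inducting along a clique tree, peeling the leaf clique with a generalized Schur complement, where the Moore--Penrose pseudoinverse plus the range condition $\bZ_{UU}\bZ_{UU}^{\dagger}\bZ_{US}=\bZ_{US}$ (automatic for positive semi-definite matrices) handles a singular private block. If you write it out in full, make two small points explicit: the Schur-complement fill-in lands only in the separator $S$, and since $S\subseteq C_j$ is a clique of $\cZ$ disjoint from the deleted vertices, the updated matrix $\bZ'$ really does respect the pattern $\cE(\cZ')$, which the induction hypothesis requires; and the set $\bbS_-^N(\cE(\cZ),0)$ must be read with diagonal blocks unconstrained (as you tacitly assume), matching the convention of the cited work. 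Your closing alternative is also accurate: the block-wise statement follows from the scalar theorem by blowing each block vertex up into a clique of scalar vertices and regrouping, which is how \cite{zheng2021chordal} states the block extension.
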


\subsection{\texorpdfstring{\gls{admm}}{ADMM}} \label{chap:ADMM}

\gls{admm} is an algorithm that can be used for decentralized optimization\cite{boyd2011distributed}. Consider the problem,
\begin{align} \label[prob]{eq:Constrained Optimization Problem}
\begin{split}
    \min_{\bX}\quad&f(\bX) \qquad
    \text{s.t.}\quad\bX\in\Omega,
\end{split}
\end{align}
where $\bX\in\bbR^{n\times m}$ is the ``primal" variable, $f:\bbR^{n\times m}\to \bbR\cup\set{+\infty}$ is the convex objective function, and $\Omega\subseteq\bbR^{n\times m}$ is the convex constraint set. 
\cref{eq:Constrained Optimization Problem} can be reformulated as
\begin{align} \label[prob]{eq:Equivalent Constrained Optimization Problem}
    \begin{split}
        \min_{\bX,\bZ}\quad&f(\bX)+I_\Omega(\bZ) \qquad
        \text{s.t.}\quad\bX-\bZ=\bzero,
    \end{split}
\end{align}
with ``clone" variable, $\bZ\in\bbR^{n\times m}$. \gls{admm} iteratively solves \cref{eq:Equivalent Constrained Optimization Problem} by calculating
\begin{subequations} \label{eq:General ADMM}
    \begin{align}
        \bX^{k+1}
            &=\arg\min_\bX\Big(f(\bX)
                +\frac{\rho}{2}\|\bX
                    -\bZ^k
                    +\bT^k\|_F^2\Big), 
                    \label{eq:General ADMM x Update} \\
        \bZ^{k+1}
            &=\arg\min_\bZ\Big(I_\Omega(\bZ)
                +\frac{\rho}{2}\|\bX^{k+1}
                    -\bZ
                    +\bT^k\|_F^2\Big) \nonumber \\ &= \Pi_\Omega(\bX^{k+1}+\bT^k), 
                    \label{eq:General ADMM z Update} \\
        \bT^{k+1}
            &=\bT^k+(\bX^{k+1}
                -\bZ^{k+1}), 
                \label{eq:General ADMM u Update}
    \end{align}
\end{subequations}
where $\bT\in\bbR^{n\times m}$ is the ``dual" variable, $k\in\bbN$ is the iteration number, $\rho>0$ is a hyper parameter \cite{boyd2011distributed}, and $\Pi_\Omega:\bbR^{n\times m}\rightarrow \Omega$ is the projection operator \cite{boyd2011distributed}. Since the functions $f$ and $I_\Omega$ are closed, proper, and convex, \cref{eq:Equivalent Constrained Optimization Problem} is a convex optimization with equality constraints, which satisfies Slater's condition, implying that its Lagrangian has a saddle point \cite{boyd2004convex}. If \cref{eq:Equivalent Constrained Optimization Problem} is feasible and its Lagrangian has a saddle point, \gls{admm} guarantees that $k\to\infty$ implies $\bX^{k}-\bZ^k\to\bzero$, $f(\bX^k)+I_\Omega(\bZ^k)\to f(\bX^\star)$, and $\bT^k\to\bT^\star$, where $\{\bX^\star,\bT^\star\}$ is a global minimizer of \cref{eq:Constrained Optimization Problem} \cite{boyd2011distributed}.

\subsection{QSR-Dissipativity of Large-Scale Multi-Agent Systems}
QSR-dissipativity, defined below, describes a bound on the relationship between system inputs and outputs.
\begin{definition} [QSR-Dissipativity \cite{vidyasagar1981input}] \label{def:QSR}
    Let $\bQ\in\bbS^{l}$, $\bR\in\bbS^{m}$, $\bS\in\bbR^{l\times m}$. The system $\mathscr{G}:\cL_{2e}^m\to\cL_{2e}^l$ is \textit{QSR-dissipative} if there exists $\beta\in\bbR$ such that for all $\bu\in\cL_2^m$ and $T>0$,
    \begin{align} \label[ineq]{eq:QSR Dissipativity}
        \int_0^T\begin{bmatrix}
            \mathscr{G}^T(\bu(t)) & \bu^T(t)
        \end{bmatrix}
        \begin{bmatrix}
            \bQ & \bS \\ \bS^T & \bR
        \end{bmatrix}
        \begin{bmatrix}
            \mathscr{G}(\bu(t)) \\ \bu(t)
        \end{bmatrix}dt\geq\beta.
    \end{align}
\end{definition}
For \gls{lti} systems, \cref{lem:KYP Lemma} can be used to prove QSR-dissipativity.
\begin{lemma}[\gls{kyp} \cite{gupta1996robust}] \label[lem]{lem:KYP Lemma} 
    An \gls{lti} system with minimal realization $\Sigma{:}\dot{\bx}{=}\bA\bx{+}\bB\bu,\ \by{=}\bC x{+}\bD u$ is QSR-dissipative if there exist matrices $\bP\succ0$, $\bQ$, $\bS$, and $\bR$ satisfying
    \setlength{\arraycolsep}{2.5pt}
    \begin{align} \label[ineq]{eqn:KYP Lemma}
        \begin{bmatrix}
            \bA^T\bP{+}\bP\bA{-}\bC^T\bQ\bC & \bP\bB{-}\bC^T\bS{-}\bC^T\bQ\bD \\
            \bB^T\bP{-}\bS^T\bC{-}\bD^T\bQ\bC & {-}\bR{-}\bS^T\bD{-}\bD^T\bS{-}\bD^T\bQ\bD
        \end{bmatrix}\preceq0.
    \end{align}
\end{lemma}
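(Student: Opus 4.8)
The plan is to use the quadratic form $V(\bx) = \bx^T\bP\bx$ as a storage function and to recognize \cref{eqn:KYP Lemma} as the precise statement that the QSR supply rate dominates $\dV$ along every trajectory of $\Sigma$. Since $\bP \succ 0$, $V$ is nonnegative, which converts this pointwise dissipation inequality into the integral bound of \cref{eq:QSR Dissipativity}. Note that minimality of the realization is not used in this (sufficiency) direction; it would only be needed for a converse statement.

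First, fix an input $\bu \in \cL_2^m$ and let $\bx$, $\by$ denote the resulting state and output trajectories of $\Sigma$ from the (fixed) initial state $\bx(0)$, so that $\by = \mathscr{G}(\bu) = \bC\bx + \bD\bu$. Differentiating $V$ along the trajectory and using $\dot{\bx} = \bA\bx + \bB\bu$ gives $\dV = \bx^T(\bA^T\bP + \bP\bA)\bx + 2\bx^T\bP\bB\bu$. Next, expand the supply rate $w := \by^T\bQ\by + 2\by^T\bS\bu + \bu^T\bR\bu$ with $\by = \bC\bx + \bD\bu$ and collect every term into a single quadratic form in the stacked vector $\bz := \begin{bmatrix}\bx^T & \bu^T\end{bmatrix}^T$. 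A direct computation shows $w - \dV = -\bz^T\bM\bz$, where $\bM$ is exactly the block matrix on the left-hand side of \cref{eqn:KYP Lemma}. By hypothesis $\bM \preceq 0$, hence $\dV(\bx(t)) \leq w(t)$ for all $t \geq 0$.

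Finally, integrate this inequality over $[0,T]$ to obtain $\int_0^T w\,dt \geq V(\bx(T)) - V(\bx(0)) \geq -\bx(0)^T\bP\bx(0)$, where the last step uses $\bP \succ 0$. The left-hand side is precisely the integral appearing in \cref{eq:QSR Dissipativity}, so taking $\beta := -\bx(0)^T\bP\bx(0)$---a constant independent of $\bu$ and $T$, equal to $0$ under the zero-initial-condition convention implicit in treating $\mathscr{G}$ as an input-output map---establishes \cref{eq:QSR Dissipativity} and completes the proof.

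The only real work is the bookkeeping in the middle step: one must track the sign conventions and verify that each cross term in the expansion of $w - \dV$ falls into the block of $\bM$ that matches \cref{eqn:KYP Lemma}. A minor additional point is to make explicit that $\beta$ may be taken constant because $\bx(0)$ is fixed (or zero), so that the single-$\beta$ requirement of \cref{def:QSR} is indeed met.
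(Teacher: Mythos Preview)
Your argument is correct and is the standard storage-function proof of the sufficiency direction of the KYP lemma. The paper itself does not prove \cref{lem:KYP Lemma}; it is stated as a preliminary result with a citation to \cite{gupta1996robust}, so there is no proof in the paper to compare against.
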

QSR-dissiaptivity is useful for ensuring $\cL_2$-stability. 
\begin{definition}[$\cL_2$-stability \cite{vidyasagar1981input}] \label{def:L2 Stable}
    An operator $\mathscr{H}:\cL_{2e}\mapsto\cL_{2e}$ is $\cL_2$-stable if there exist $\gamma>0$ and $\beta \in \bbR$ such that
    \begin{align}
        \|(\mathscr{H}\bu)_T\|_2\leq\gamma\|\bu_T\|_2+\beta\quad \forall u\in\cL_2,\ T>0.
    \end{align}
\end{definition}

For multi-agent systems, \gls{vndt}, stated below, relates the dissipativiy of each agent to $\cL_2$ stability of entire system.

\begin{theorem}[\gls{vndt} \cite{vidyasagar1981input}] \label{thm:Interconnected QSR Stability}
    Consider a multi-agent system, $\mathscr{G}:\bu\to\by$, composed of $N$ Q$_i$S$_i$R$_i$-dissipative agents, $\mathscr{G}_i:\cL^{m_i}_{2e}\rightarrow\cL^{l_i}_{2e}$, with mappings $\by_i=\mathscr{G}_i\be_i$ interconnected as
    \begin{align}\label{eq:Interconnected system}
        \begin{split}
            \by=\mathscr{G}\bu, \qquad \be=\bu+\bH\by, 
        \end{split}
    \end{align}
    where $\be=[\be_1^T, \dots, \be_N^T]^T$ is the interconnection signal, $\by = [\by_1^T, \dots, \by_N^T]^T$ is the output, $\bu = [\bu_1^T, \dots, \bu_N^T]^T$ is the exogenous input, and $\bH$ is the interconnection matrix with $(\bH)_{ii}=\bzero$.
    Then $\mathscr{G}$ is $\cL_2$ stable if $\barbQ(\bX)\prec 0$, where
    \begin{align}\label{eq:barQSR}
        \barbQ(\bX)&=\bQ+\textbf{SH}+\bH^T\bS^T+\bH^T\textbf{RH},
    \end{align}
    with $\bX=\diag(\bX_i)_{i\in\bbN_N}$, $\bX_i = \mathrm{diag}(\bP_i,\bQ_i,\bS_i,\bR_i)$, $\bQ=\diag(\bQ_i)_{i\in\bbN_N}$, $\bR=\diag(\bR_i)_{i\in\bbN_N}$, and $\bS=\diag(\bS_i)_{i\in\bbN_N}$.
\end{theorem}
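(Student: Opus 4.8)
The plan is to obtain \cref{def:L2 Stable} directly from the summed QSR-dissipativity inequalities of the agents, after eliminating the interconnection signal $\be$ in favor of $\bu$ and $\by$. First I would write \cref{eq:QSR Dissipativity} for each agent $i$ with its constant $\beta_i$, and add the $N$ inequalities. Because $\bQ$, $\bS$, $\bR$ are the block-diagonal aggregates of the $\bQ_i$, $\bS_i$, $\bR_i$, and $\by$, $\be$ are the stacked signals, the sum collapses to one quadratic form: for every $T>0$,
\begin{align}
    \int_0^T \begin{bmatrix}\by^T(t) & \be^T(t)\end{bmatrix}
    \begin{bmatrix}\bQ & \bS \\ \bS^T & \bR\end{bmatrix}
    \begin{bmatrix}\by(t) \\ \be(t)\end{bmatrix}\,dt \;\ge\; \beta, \nonumber
\end{align}
with $\beta := \sum_{i=1}^N \beta_i \in \bbR$.

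Second, I would substitute the interconnection law $\be = \bu + \bH\by$ and expand the integrand. Collecting the terms quadratic in $\by$ produces exactly $\by^T\big(\bQ + \bS\bH + \bH^T\bS^T + \bH^T\bR\bH\big)\by = \by^T\barbQ(\bX)\by$, a cross term $2\by^T(\bS + \bH^T\bR)\bu$, and a term $\bu^T\bR\bu$. Writing this with the truncation notation gives $\langle \by_T, \barbQ(\bX)\by_T\rangle + 2\langle \by_T, (\bS+\bH^T\bR)\bu_T\rangle + \langle \bu_T, \bR\bu_T\rangle \ge \beta$ for all $T>0$; here I would invoke well-posedness of \cref{eq:Interconnected system} — that $\by\in\cL_{2e}$ whenever $\bu\in\cL_2$ — so the truncated signals lie in $\cL_2$ and all inner products are finite.

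Third, I would exploit $\barbQ(\bX)\prec 0$: choose $\epsilon>0$ with $\barbQ(\bX)\preceq -\epsilon\bI$ (e.g. $\epsilon = -\lambda_{\max}(\barbQ(\bX))$), move that term across, and bound the rest by Cauchy--Schwarz and induced norms, yielding $\epsilon\|\by_T\|_2^2 \le 2c\,\|\by_T\|_2\,\|\bu_T\|_2 + d\,\|\bu_T\|_2^2 - \beta$ with $c = \|\bS+\bH^T\bR\|$ and $d = \|\bR\|$. This is a scalar quadratic inequality in $\|\by_T\|_2$; solving it and using $\sqrt{x+y}\le\sqrt{x}+\sqrt{y}$ to split the discriminant gives $\|\by_T\|_2 \le \gamma\|\bu_T\|_2 + \beta'$ with $\gamma = (c+\sqrt{c^2+\epsilon d})/\epsilon$ and $\beta'$ a constant built from $\beta$ and $\epsilon$. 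Since this holds uniformly in $T>0$, $\mathscr{G}:\bu\mapsto\by$ satisfies \cref{def:L2 Stable}.

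The main obstacle I anticipate is not any single algebraic step but the bookkeeping around the constant $\beta$ and the truncations: $\beta$ can be positive, so the quadratic-formula estimate must be arranged so that the offset $\beta'$ is a genuine real constant independent of $T$ and $\bu$, and one must justify that causality/well-posedness of the interconnection legitimately lets the per-agent inequalities be summed and truncated at a common $T$. The expansion that surfaces $\barbQ(\bX)$, the norm bounds, and the final scalar estimate are otherwise routine.
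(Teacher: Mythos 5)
Your proposal is correct and follows what is essentially the paper's (i.e., the cited Vidyasagar/Moylan--Hill) argument: the paper states this theorem by citation rather than proving it, and the standard proof is exactly your route of summing the per-agent inequalities \cref{eq:QSR Dissipativity}, substituting $\be=\bu+\bH\by$ to surface $\barbQ(\bX)$, and solving the resulting scalar quadratic in $\|\by_T\|_2$ to obtain the bound of \cref{def:L2 Stable}. The only point to make explicit, which you already flag, is that well-posedness/causality of the interconnection (so that $\be,\by\in\cL_{2e}$ and the per-agent inequalities apply to the truncated signals at a common $T$) is a standing assumption of the theorem rather than something to be derived.
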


\section{PROBLEM STATEMENT} \label{chap:Problem Statement}
If all agents, $\mathscr{G}_i$, are \gls{lti}, $\cL_2$ the stability criteria of \Cref{thm:Interconnected QSR Stability} can be verified by solving
\begin{subequations} \label[prob]{opt:Main Problem}
    \begin{align}
        \text{Find}\quad&\bX_i \quad i\in\bbN_N,\label{opt:Main Problem: Objective} \\
        \text{s.t.}\quad&\bX_i\in\bbP_i, \label{opt:Main Problem: KYP}  \\
            &\bX\in\tilbbQ, \label{opt:Main Problem: Stability} 
    \end{align}
\end{subequations}
where $\bX$ and $\bX_i$ are defined in \Cref{thm:Interconnected QSR Stability}, $\bbP_i=\{\bX_i\;|\;$ $\mathbf{X}_i \text{ satisfies \cref{lem:KYP Lemma} and }\bP_i\succ0\}$, and  $\tilbbQ=\{\bX\;|\;\barbQ(\bX)\prec0\}$.

Traditionally, \cref{opt:Main Problem} has been solved by finding $\bX_i$ that satisfy \cref{opt:Main Problem: KYP} for each agent, then checking if \cref{opt:Main Problem: Stability} holds. This approach is sub-optimal because all dissipative systems satisfy \cref{eq:QSR Dissipativity} with various $(\bQ,\bS,\bR)$ combinations, and finding compatible dissipativity parameters is complicated with many agents. Therefore, co-optimizing agents' dissipativity in \cref{opt:Main Problem} is more effective than checking $\barbQ(\bX)\prec0$ afterward. Nonetheless, solving \cref{opt:Main Problem} directly introduces several potential drawbacks. First, it requires each agent to share their state space dynamics matrices, $(\bA_i,\bB_i,\bC_i,\bD_i)$, which may be unacceptable due to intellectual property or cyber-security concerns. Second, if the number of agents, $N$, is very large, \cref{opt:Main Problem: Stability} dominates the computational complexity, which becomes $\cO(N^6)$ \cite{nesterov2013introductory}. This paper overcomes these limitations by solving \cref{opt:Main Problem} with distributed optimization methods.

\section{MAIN RESULTS} \label{chap:Main Results}
This section develops two algorithms to solve \cref{opt:Main Problem} in a distributed manner. The first algorithm achieves information security by allowing each agent to calculate its dissipativity parameters independently without sharing its dynamics. These dissipativity parameters are shared and iteratively adjusted to satisfy \gls{vndt}. Because there are uncountably many different systems with the same dissipativity parameters, this maintains much greater information security than sharing dynamics information. However, this first approach involves a problem with a \gls{lmi} constraint whose dimension scales with the number of agents.
To remedy this, the second algorithm applies \Cref{thm:Chordal Decomposition} to \gls{vndt}. This algorithm reduces computational complexity by reformulating the largest \gls{lmi} into smaller \glspl{lmi} and a single equation. It also limits the sharing of certain dissipativity parameters, which further enhances information security.

\subsection{Information-Secure Network Stability Analysis} \label{chap:Distributed Stability Analysis}
Using indicator functions, \cref{opt:Main Problem} is equivalent to 
\begin{align} \label[prob]{opt:Distributed Optimization}
    \arg\min_{\bX,\bZ}\quad\mathlarger{\Sigma}_{i\in\bbN_N}I_{\bbP_i}(\bX_i)+I_{\tilbbQ}(\bZ)\quad\text{s.t.}\quad\bX-\bZ=\textbf{0},
\end{align}
with the primal and clone variable, $\bX$ and $\bZ=\diag(\bZ_i)_{i\in\bbN_N}$. This instantiates \cref{eq:Equivalent Constrained Optimization Problem} with $f(\bX)=0$, which is convex, 
%
so \cref{opt:Distributed Optimization} can be solved using \gls{admm} by iterating as
\begin{subequations} \label{eq:Distributed ADMM}
    \begin{align}
        \bX_i^{k+1}&=\Pi_{\bbP_i}(\bZ_i^k-\bT_i^k), \quad
            i\in\bbN_N,\label{eq:X proj}\\
        \bZ^{k+1}&=\Pi_{\tilbbQ}(\bX^{k+1}+\bT^k), \label{eq:Z proj} \\ 
        \bT_i^{k+1}&=\bT_i^k+(\bX_i^{k+1}-\bZ_i^{k+1}), \quad
            i\in\bbN_N, \label{eq:T min}
    \end{align}
\end{subequations}
where $\bT=\diag(\bT_i)_{i\in\bbN_N}$ acts as the dual variable. \cref{alg:01} describes the iterative optimization process of using \cref{eq:Distributed ADMM}. The initial points satisfy $\bX^0=\bZ^0$ and $\bT^0=\textbf{0}$. Any choice is acceptable for the initial $\bX_i^0$. A natural choice is $\mathbf{X}^0_i\in\bbP_i$. Since \cref{alg:01} seeks a feasible point, rather than an optimal one, $\barbQ(\bX)\prec0$ acts as the stopping criterion. If a feasible point exists for \cref{opt:Main Problem}, then all assumptions for the convergence of \gls{admm} hold, so \cref{alg:01} converges to a feasible point, confirming network stability. If \cref{alg:01} converges, \gls{vndt} ensures stability. Otherwise, \gls{vndt} may not guarantee the stability of the network.

Sometimes, only a single $\bQ\bS\bR$ property can be established for an agent. In such cases, the computation of \cref{eq:X proj} can be bypassed by directly assigning $\bX_i$ to the corresponding dissipaitivy parameters.  For example if the $i^\thh$ agent is an RLC circuit of unknown parameters, it is passive, and that is all that is known. Then \cref{eq:X proj} can be substituted with $\bX_i=\diag(\bP_i,\textbf{0},\bI,\textbf{0})$, for any $\bP_i\succ0$.

\begin{algorithm}
    \caption{Information-secure network stability analysis}\label{alg:01}
    \begin{algorithmic}[1]
        \Require MaxIter$,\bA_i,\bB_i,\bC_i,\bD_i,\bX_i^0$ for $i\in\bbN_N$
        \Ensure $\bX^k$
        \State Construct dynamics using $\bA_i,\bB_i,\bC_i,\bD_i$ for $i\in\bbN_N$
        \State Initialize $k=0$, $\bX^0=\diag(\bX_i^0)_{i\in\bbN_N}$, $\bZ^0=\bX^0$, and $\bT^0=\textbf{0}$
        \While {$\barbQ(\bX)\nprec0,k<$MaxIter}
            \State $k\gets k+1$
            \State Find $\bX_i^k$ by \cref{eq:X proj} in parallel
            \State Find $\bZ^k$ by \cref{eq:Z proj} at a centralized node
            \State Find $\bT_i^k$ by \cref{eq:T min} in parallel
        \EndWhile
        \If {$k<$MaxIter}
            \State Multi-agent system is stable with $\bX^k=\diag(\bX_i^0)_{i\in\bbN_N}$
        \EndIf
    \end{algorithmic}
\end{algorithm}

\subsection{Chordal Decomposition of \texorpdfstring{\gls{vndt}}{VNDT}} \label{SubChap: Chordal Decomposition}

The key advantage of \cref{alg:01} is that each agent independently verifies its own dissipativity without sharing its dynamics. The calculation time of $k^\thh$ iteration is $t_{k}=\max_{i\in\bbN_N}(t_k^{X_i})+t_k^Z+\max_{i\in\bbN_N}(t_k^{T_i})$, where $t_k^Z$ is the time required for \cref{eq:Z proj} and $t_k^{X_i}$ and $t_k^{T_i}$ are the times required respectively for \cref{eq:X proj} and \cref{eq:T min} at the $i^{th}$ agent. In multi-agent systems, $t_k^Z$ dominates the overall computational cost, as the size of \cref{eq:Z proj} scales with the size of the network, $N$, and its complexity when using interior point methods scales as $\cO(N^6)$ \cite{nesterov2013introductory}. The other sub-problems scale with the size of each subsystem, $n_i$, which is much smaller than $N$. Therefore, the bottleneck is solving \cref{eq:Z proj}, which can be eliminated by decomposing $\barbQ$.

The terms in \cref{tab:Chordal Decomposition} are used to explain the decomposition procedure. From \cref{eq:barQSR}, each component of $\barbQ(\bX)$ is 
\begin{subnumcases}{\label{eq:barQ} (\barbQ)_{ij}{=}} 
        \begin{aligned}
        &\bQ_i{+}\mathlarger{\Sigma}_{k\in\cV_i(\cH)}(\bH)_{ki}^T\bR_k(\bH)_{ki}, 
        \end{aligned}\quad\quad i=j, \\
        \begin{aligned}
        &\bS_i(\bH)_{ij}
            {+}(\bH)_{ji}^T\bS_j^T \\
            &\quad{+}\mathlarger{\Sigma}_{k{\in}\cV_i(\cH){\cap}\cV_j(\cH)}(\bH)_{ki}^T\bR_k(\bH)_{kj}, 
        \end{aligned}\; i\neq j,
\end{subnumcases}
where $\bQ_i,\bS_i$, and $\bR_i$ are dissipativity parameters of $i^\thh$ agent, $\bH$ is the interconnection matrix of the network, and $\cV_i(\cH)=\set{j\in\cV(\cH)\,|\,(i,j)\in\cE(\cH)}$. Since $\barbQ(\bX)$ is symmetric, its structure graph, $\barcQ$, is undirected and $\cH\subset\barcQ$ and $\cE(\barcQ)=\cE(\cH)+\set{(i,j)\,|\,i,j\in\cV(\cH),\cV_i(\cH)\cap\cV_j(\cH)\neq\varnothing}$.

\begin{table}
\caption{Terms for decomposition of \cref{eq:barQSR}}
\label{tab:Chordal Decomposition}
\begin{center}
\renewcommand{\arraystretch}{1.2}
\begin{tabular}{p{0.75cm}|p{6.75cm}}
\hline
Terms & Definition \\
\hline\hline
$M$ & The number of maximal cliques of $\barcQ$ \\
\hline
$C_p(\barcQ)$ & $p^\text{th}$ maximal clique of $\barcQ$   \\
\hline
$\cC_p$ & The complete graph from $C_p(\barcQ)$, which has $\cV(\cC_p)=C_p(\barcQ)$ and $\cE(\cC_p)=\set{(i,j)\in\cE(\barcQ)\,|\,i,j\in C_p(\barcQ)}$ \\ \hline
$\barcQ_o$ & 
$\bigcup_{p,q\in\bbN_M}(\cC_p\cap\cC_q)$\\
\hline
$L$ & $|\cE(\barcQ_o)|$; the number of overlapped edges from \Cref{thm:Chordal Decomposition}. \\
\hline
\vfill
$\cV(\mathcal{R})$ & 
$\set{k{\in}\cV_i(\cH){\cap}\cV_j(\cH){|}{\forall}(i,j){\in}\cE(\barcQ_o)}$; the set of agents whose $\bR$ matrix is used to calculate $(\barbQ)_{ij}$ for $(i,j)\in\cE(\barcQ_o)$
\\ \hline
\vfill
$\bY_{ij}^{p}$ & The new matrix variables defined from the overlapped position in \cref{eq:Chordal Decomposition}, where $p\in\bbN_M$ is the index of maximal cliques and $(i,j)\in\cE(\barcQ_o)$. \\
\hline
\vfill
$\bY$ & $\diag\big(\diag(\bY_{ij}^p)_{p{\in}\{p{\in}\bbN_M{|}(i,j){\in}\cE(\barcQ_o{\cap}\cC_p)\}}\big)_{(i,j){\in}\cE(\barcQ_o)}$;~the~block diagonal matrix defined from all overlapped variables $\bY_{ij}^p$. \\
\hline
\vfill
$\barbY_p$ & $\diag(\bY_{ij}^p)_{(i,j)\in\cE(\barcQ_o\cap\cC_p)}$; the block diagonal matrix with block diagonal components in $\bY$, using vertices in $C_p(\barcQ)$. \\
\hline
\vfill
$\barbX_p$ & The block diagonal matrix defined from block diagonal components in $\bX$ which are used to calculate $(\barbQ)_{ij}$ in \cref{eq:barQ}, where $(i,j)\in\cE(\cC_p)-\cE(\barcQ_o)$. \\
\hline
\vfill
$\hatbX$ & $\diag\big(\diag(\bQ_i)_{i\in\cV(\barcQ_o)},\diag(\bS_i)_{i\in\cV(\barcQ_o)},\diag(\bR_i)_{i\in\cV(\mathcal{R}) }\big)$; the block diagonal matrix defined from dissipativity matrices used to calculate $(\barbQ)_{ij}$ for $(i,j)\in\cE(\barcQ_o)$. \\
\hline
\end{tabular}
\end{center}
\vspace*{-1.25\baselineskip} 
\end{table}

In practice, the constraint $\barbQ(\bX)\prec0$ can be replaced by $\barbQ(\bX)+\epsilon\bI\preceq0$ with $\epsilon>0$.
From \Cref{thm:Chordal Decomposition}, we have  $\barbQ(\bX)+\epsilon\bI=\sum_{p=1}^M\bE_{C_p(\barcQ)}^T\barbQ_p\bE_{C_p(\barcQ)}$. 
By expressing $\barbQ(\bX)+\epsilon\bI$ as the sum of smaller negative semi-definite matrices based on the maximal cliques of $\barcQ$, as exemplified in \cref{fig:Chordal Decomposition}, $\barbQ(\bX)+\epsilon\bI\preceq0$ can be decomposed into the $M$ smaller \glspl{lmi} and $L$ equality constraints,
\begin{align}
    \barbQ_p(\barbX_p,\barbY_p)&\preceq0,\quad\;\; \forall p\in\bbN_M,  \label[ineq]{eq:Chordal LMIs}\\
    \mathlarger{\Sigma}_{\set{p\in\bbN_M|(i,j)\in\cE(\barcQ_o\cap\cC_p)}} \bY_{ij}^{p} &{=} (\barbQ)_{ij},\;\forall(i,j)\in\cE(\overline{Q}_o).  \label{eq:Chordal Equality}
\end{align}

\begin{figure}
    \centering
    \includegraphics[width = 0.4\textwidth]{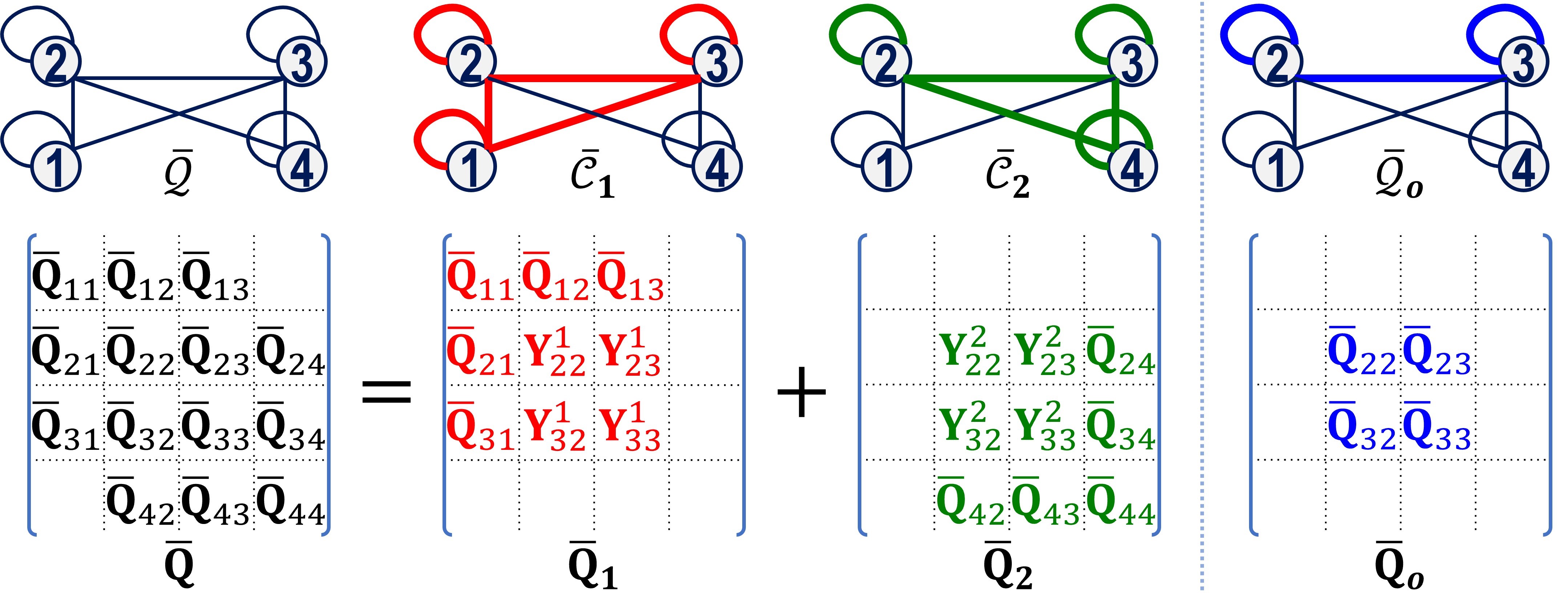}
    \caption{Example of chordal decomposition of graph $\barcQ$: $\barcQ_o$ represents the overlapped graph from \Cref{thm:Chordal Decomposition}.}
    \label{fig:Chordal Decomposition}
    \vspace*{-1.25\baselineskip} 
\end{figure}

Each of the \glspl{lmi} in \cref{eq:Chordal LMIs} are independent, meaning that $\barbX_p$ and $\barbX_q$ for $p\neq q$ contain different $(\bQ_i,\bS_i,\bR_i)$ parameters. The following theorem shows that $\big(\cV_i(\cH)\cap\cV_j(\cH)\big)\cap\big(\cV_k(\cH)\cap\cV_l(\cH)\big)=\varnothing$ for $i,j\in C_p(\barcQ)$ and $k,l\in C_q(\barcQ)$, which implies this independence.

\begin{theorem} \label{thm:No Overlap}
    Consider $\barbQ(\bX)$ from \cref{eq:barQ}. Using the notation of \cref{tab:Chordal Decomposition}, for all $i,j\in C_p(\barcQ)$ and $k,l\in C_q(\barcQ)$ with $p\neq q$, if $(i,j),(k,l)\notin\cE(\barcQ_o)$, then $\big(\cV_i(\cH)\cap\cV_j(\cH)\big)\cap\big(\cV_k(\cH)\cap\cV_l(\cH)\big)=\varnothing$.
\end{theorem}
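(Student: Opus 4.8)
The plan is to argue by contradiction, converting the existence of a common neighbour into a clique of $\barcQ$ that is too large to be split between $C_p(\barcQ)$ and $C_q(\barcQ)$. Suppose $(i,j),(k,l)\notin\cE(\barcQ_o)$ (so, in the relevant case, $i\neq j$ and $k\neq l$) yet there is an agent $m\in\big(\cV_i(\cH)\cap\cV_j(\cH)\big)\cap\big(\cV_k(\cH)\cap\cV_l(\cH)\big)$. Then $m$ is an $\cH$-neighbour of each of $i,j,k,l$, i.e. $(i,m),(j,m),(k,m),(l,m)\in\cE(\cH)\subseteq\cE(\barcQ)$; since $(\bH)_{ii}=\bzero$ forbids self-loops in $\cH$, $m\notin\{i,j,k,l\}$.

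First I would show that $\{i,j,k,l\}$ is a clique of $\barcQ$. For any two distinct $a,b\in\{i,j,k,l\}$, the vertex $m$ lies in $\cV_a(\cH)\cap\cV_b(\cH)$, which is therefore nonempty, so by the edge rule $\cE(\barcQ)=\cE(\cH)\cup\{(a,b)\,|\,\cV_a(\cH)\cap\cV_b(\cH)\neq\varnothing\}$ one gets $(a,b)\in\cE(\barcQ)$. Hence every pair in $\{i,j,k,l\}$ is an edge of $\barcQ$.

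Next I would extend this clique to a maximal clique $C_r(\barcQ)$ with $r\in\bbN_M$. Because $i,j\in C_p(\barcQ)\cap C_r(\barcQ)$ and $(i,j)\in\cE(\barcQ)$, the edge $(i,j)$ belongs to $\cE(\cC_p)\cap\cE(\cC_r)=\cE(\cC_p\cap\cC_r)$. If $p\neq r$, then $\cC_p\cap\cC_r$ is one of the terms of $\barcQ_o=\bigcup_{p'\neq q'}(\cC_{p'}\cap\cC_{q'})$, so $(i,j)\in\cE(\barcQ_o)$, contradicting the hypothesis; therefore $p=r$. Running the identical argument with $k,l$ in place of $i,j$ forces $q=r$, hence $p=q$, contradicting $p\neq q$. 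That contradiction establishes the claim.

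The only real work is bookkeeping with the definitions. The crucial step is passing from ``$i$ and $j$ share an $\cH$-neighbour'' to ``$(i,j)\in\cE(\barcQ)$'', which uses the precise construction of $\barcQ$ as the two-hop closure of $\cH$, not merely $\cH\subset\barcQ$; and the final contradiction relies on reading $\barcQ_o$ as the union over \emph{distinct} pairs of maximal cliques. Degenerate situations---some of $i,j,k,l$ coinciding, or one of them equalling $m$---are either ruled out by the no-self-loop property or reduce to the generic case, and need only a one-line remark.
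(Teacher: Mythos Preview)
Your proof is correct and follows the same contrapositive strategy as the paper: a common $\cH$-neighbour of $i,j,k,l$ forces $\{i,j,k,l\}$ to be a clique of $\barcQ$, from which one concludes $(i,j),(k,l)\in\cE(\barcQ_o)$. Your finishing step---extending $\{i,j,k,l\}$ to a maximal clique $C_r$ and arguing $p=r=q$---is in fact more rigorous than the paper's, which asserts directly that $i,j\in C_q(\barcQ)$ and $k,l\in C_p(\barcQ)$ without justifying why $i,j$ are adjacent to \emph{every} vertex of $C_q$.
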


\begin{proof}
    \Cref{thm:No Overlap} can be proved using its contrapositive. Assume that $\big(\cV_i(\cH)\cap\cV_j(\cH)\big)\cap\big(\cV_k(\cH)\cap\cV_l(\cH)\big)\neq\varnothing$; that is, the intersections $\cV_i(\cH)\cap\cV_k(\cH)$, $\cV_i(\cH)\cap\cV_l(\cH)$, $\cV_j(\cH)\cap\cV_k(\cH)$, and $\cV_j(\cH)\cap\cV_l(\cH)$ are non-empty. This implies that $(\barbQ)_{ik}\neq\textbf{0}$, $(\barbQ)_{il}\neq\textbf{0}$, $(\barbQ)_{jk}\neq\textbf{0}$, and $(\barbQ)_{jl}\neq\textbf{0}$, meaning that $i,j\in C_q(\barcQ)$ and $k,l\in C_p(\barcQ)$. Since $i,j\in C_p(\barcQ)$ and $k,l\in C_q(\barcQ)$, it follows that $(i,j),(k,l)\in\cE(\barcQ_o)$.
\end{proof}

Due to \Cref{thm:No Overlap}, \cref{opt:Main Problem: Stability}, $\barbQ(\bX)+\epsilon\bI{\preceq}0$, can be divided into smaller inequality constraints, $\diag(\barbX_p,\barbY_p)\in\barbbQ_p$ for all $p\in\bbN_M$, and single equality constraint, $\diag(\hatbX,\bY)\in\hatbbQ$, where $\barbbQ_p=\set{\diag(\barbX_p,\barbY_p)\;|\;\barbQ_p(\barbX_p,\barbY_p)\preceq0}$ and $\hatbbQ=\set{\diag(\hatbX,\bY)\;|\;\sum_{\set{p\in\bbN_M|(i,j)\in\cE(\barcQ_o\cap\cC_p)}} \bY_{ij}^{p} {=} (\barbQ)_{ij}\,\forall(i,j)\in\cE(\barcQ_o)}$. Then, \cref{opt:Main Problem} is equivalently formulated as
\begin{subequations} \label[prob]{opt:Main Chordal Problem}
    \begin{align}
        \text{Find}\quad&\bX_i \quad i\in\bbN_N,\label{opt:Main Chordal Problem: Objective} \\
        \text{s.t.}\quad&\bX_i\in\bbP_i, \label{opt:Main Chordal Problem: KYP} \\ 
            &\diag(\barbX_p,\barbY_p)\in\barbbQ_p,\quad p\in\bbN_M, \label{opt:Main Chordal Problem: Stability}  \\
            &\diag(\hatbX,\bY)\in\hatbbQ, \label{opt:Main Chordal Problem: Equality}
    \end{align}
\end{subequations}
where $\bX_i$, $\bbP_i$ are defined in \cref{opt:Main Problem}.

\begin{table}
\caption{Variables for \gls{admm} of \cref{opt:Main Chordal Problem}}
\label{tab:ADMM Variables}
\begin{center}
\renewcommand{\arraystretch}{1.2}
\begin{tabular}{p{1cm}|p{6.5cm}}
\hline
 Variables & Definition \\
\hline\hline
$\bZ_p$ & The clone variable of $\diag(\barbX_p,\barbY_p)$\\
\hline
$\bW$ & The clone variable of $\diag(\hatbX,\bY)$\\
\hline
$\bJ$ & The global clone variable of $\diag(\bX,\bY)$\\
\hline
$\bT_i$ & The dual variable of $\bX_i$ \\
\hline
$\bU_p$  & A dual variable of $\bZ_p$ \\
\hline
$\bV$    & A dual variable of $\bW$ \\
\hline
\vfill
$\tilbJ_i$  & The block diagonal matrix defined from block diagonal components in $\bJ$, which serve as a clone variable of $\bX_i$ \\
\hline
\vfill
$\barbJ_p$   & The block diagonal matrix defined from block diagonal components in $\bJ$, which serve as a clone variable of $\bZ_p$ \\
\hline
\vfill
$\hatbJ$      & The block diagonal matrix defined from block diagonal components in $\bJ$, which serve as a clone variable of $\bW$ \\
\hline
\end{tabular}
\end{center}
\vspace*{-1.25\baselineskip} 
\end{table}

\subsection{Distributed Network Stability Analysis} \label{SubChap: Chordal}
By introducing clone and dual variables in \cref{tab:ADMM Variables}, \Cref{opt:Main Chordal Problem} is equivalent to 
\begin{subequations} \label[prob]{opt:Eqauivalent Main Chordal Problem}
    \begin{align}
        \arg\min_{\bX,\bZ,\bW,\bJ}\quad&
            \mathlarger{\Sigma}_{i{\in}\bbN_N}I_{\bbP_i}(\bX_i)
            {+}\mathlarger{\Sigma}_{p{\in}\bbN_M}I_{\barbbQ_p}(\bZ_p)
            {+}I_{\hatbbQ}(\bW), \label{opt:Equivalent Main Problem: Objective} \\
        \text{s.t.}\quad
            &\bX_i-\tilbJ_i=\textbf{0},\quad i\in\bbN_N, \label{opt:Equivalent Main Chordal Problem: KYP} \\ 
            &\bZ_p-\barbJ_p=\textbf{0},\quad p\in\bbN_M, \label{opt:Equivalent Main Chordal Problem: Stability}  \\
            &\bW-\hatbJ=\textbf{0}. \label{opt:Equivalent Main Chordal Problem: Equality}
    \end{align}
\end{subequations}

Now, \gls{admm} can be applied to \cref{opt:Eqauivalent Main Chordal Problem}.
The \gls{admm} solution to \cref{opt:Eqauivalent Main Chordal Problem} can be stated in three steps as follows, where $k\in\bbN$ is the iteration number.

\subsubsection{Primal \textbf{X}, Clones \textbf{Z}, \textbf{W} Update}
Holding $\bJ^k$, $\bT^k$, $\bU^k$, and $\bV^k$, constant, $\bX^{k+1}$, $\bZ^{k+1}$ and $\bW^{k+1}$ are updated as
\begin{subequations} \label{eq:Chordal proj}
    \begin{align}
        \bX_i^{k+1}
            &=\Pi_{\bbP_i}(\tilbJ_i^k-\bT_i^k), 
            \quad i\in\bbN_N, \label{eq:X Chordal proj}\\
        \bZ_p^{k+1}
            &=\Pi_{\barbbQ_p}(\barbJ_p^{k}-\bU_p^k) 
            \quad p\in\bbN_M, \label{eq:Z Chordal proj} \\ 
        \bW^{k+1}
            &=\Pi_{\hatbbQ}(\hatbJ^k-\bV^k). \label{eq:W Chordal proj} 
    \end{align}
\end{subequations}

\begin{figure}
    \centering
    \includegraphics[width = 0.475\textwidth]{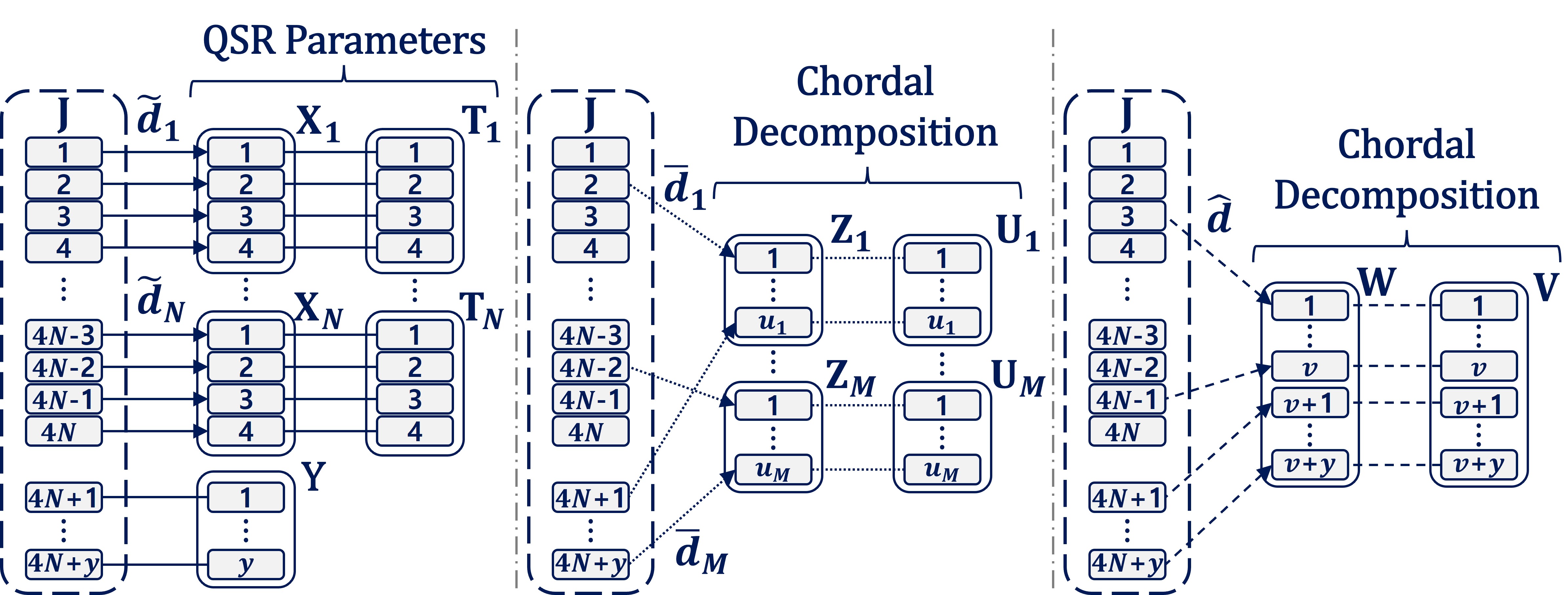}
    \caption{Mapping among all block diagonal components of variables in \cref{tab:ADMM Variables}: The gray rectangles represent the block component indices defining each variable. The QSR parameters are grouped by agent, with blocks for each dissipativity matrix, while the chordal parameters are grouped by clique with blocks for each \gls{lmi} and a matrix equation}.
    \label{fig:Matching}
    \vspace*{-1.25\baselineskip} 
\end{figure}

\subsubsection{Clone \textbf{J} Update}
Holding $\bX^{k+1}$, $\bZ^{k+1}$, $\bW^{k+1}$, $\bT^k$, $\bU^k$, and $\bV^k$, the clone variable $\mathbf{J}^{k+1}$ is updated 
according to
\begin{align} \label{eq:J Chordal bulk}
    \bJ^{k+1}=\arg\min_\bJ
    \left(\begin{array}{l}
         \sum_{i\in\bbN_N}
                        \|\bX_i^{k+1}-\tilbJ_i+\bT_i^k\|_F^2 \\
            \,+\sum_{p\in\bbN_M}
                        \|\bZ_p^{k+1}-\barbJ_p+\bU_p^k\|_F^2 \\
            \,\,+\|\bW^{k+1}-\hatbJ+\bV_r^k\|_F^2
    \end{array}\right).
\end{align}
\cref{eq:J Chordal bulk} is similar to \cref{eq:General ADMM z Update} with $I_\Omega(\bJ)=\textbf{0}$.

To solve \cref{eq:J Chordal bulk}, we define mappings from block indices in the global clone to agent- and clique-wise block diagonal matrix indices. To do this, first note that diagonal blocks of $\widetilde{\bJ}_i$ correspond to matrices in \cref{eqn:KYP Lemma} as $\widetilde{\bJ}_i=\diag(\bP_i,\bQ_i,\bS_i,\bR_i)$, so $\widetilde{d}_a=\{ (i,j)\in \bbN_N\times\bbN_4 | \bJ_a=(\widetilde{\bJ}_i)_{jj} \}$. Likewise, the block diagonals of $\overline{\bJ}_p$ correspond to different \glspl{lmi} in \cref{eq:Chordal LMIs} within the $p^{th}$ clique, so $\overline{d}_a=\{ (p,j)\in \bbN_M\times\bbN_{u_p} | \bJ_a=(\overline{\bJ}_p)_{jj} \}$, where $u_p$ represents the number of block matrices in $\barbX_p$. Lastly, $\widehat{d}_a=\{ (j)\in \bbN_{\nu+y} | \bJ_a=(\hat{\bJ})_{jj} \}$, corresponding to \cref{eq:Chordal Equality}, where $v$ and $y$ denote the number of block matrices associated with $\hatbX$ and $\bY$, respectively. With this, the closed-form solution of \cref{eq:J Chordal bulk} for each block of $\bJ$, is
\begin{align} \label{eq:J Chordal update}
    \bJ_a^{k+1}&{=}
    \frac{
        \displaystyle\sum_{(i,j)\in\widetilde{d}_a}\hspace{-8pt}
                    {(\bX_i^{k+1}{+}\bT_i^k)_{jj}}
        {+}\hspace{-8pt}\sum_{(p,j)\in\overline{d}_a}\hspace{-10pt}
                    {(\bZ_p^{k+1}{+}\bU_p^k)_{jj}}
        {+}\hspace{-3pt}\sum_{j\in\hat{d}_a}\hspace{-3pt}
                (\bW^{k+1}{+}\bV^k)_{jj}
                }{|\widetilde{d}_a|+|\overline{d}_a|+|\hat{d}_a|}.
\end{align}

\cref{eq:J Chordal update} computes the 
average of the corresponding variables. This process leads all variables, independently computed from \cref{eq:X Chordal proj,eq:Z Chordal proj,eq:W Chordal proj,eq:T Chordal update,eq:U Chordal update,eq:V Chordal update}, to a network-wide solution satisfying \cref{opt:Main Chordal Problem: KYP,opt:Main Chordal Problem: Objective,opt:Main Chordal Problem: Stability}.

\subsubsection{Dual Update}
Holding $\bX^{k+1}$, $\mathbf{J}^{k+1}$, $\bZ^{k+1}$, and $\bW^{k+1}$, constant, $\bT^{k+1}$, $\bU^{k+1}$, and $\bV^{k+1}$, are updated according to 
\begin{subequations}\label{eq:Dual Chordal update}
    \begin{align}
        \bT_i^{k+1}&=\bT_i^k+(\bX_i^{k+1}-\tilbJ_i^{k+1}) \label{eq:T Chordal update} \\
        \bU_p^{k+1}&=\bU_p^k+(\bZ_j^{k+1}-\barbJ_p^{k+1}) \label{eq:U Chordal update} \\
        \bV^{k+1}&=\bV^k+(\bW^{k+1}-\hatbJ^{k+1}), \label{eq:V Chordal update}
    \end{align}
\end{subequations}
to ensure convergence of the iterations. The implementation of these \gls{admm} steps is summarized in \cref{alg:02}.

\begin{algorithm}
    \caption{Distributed network stability analysis}\label{alg:02}
    \begin{algorithmic}[1]
        \Require MaxIter$,\bA_i,\bB_i,\bC_i,\bD_i,\bX_i^0,\epsilon$ for $i\in\bbN_N$
        \Ensure $\bX^k$
        \State Construct dynamics using $\bA_i,\bB_i,\bC_i,\bD_i$ for $i\in\bbN_N$
        \State Initialize $k=0$, $\bJ^0=\diag(\diag(\bX_i^0)_{i\in\bbN_N},\bI^\bY)$, where $\bX^0=\tilbJ^0$, $\bZ^0=\barbJ^0$, and $\bW^0=\hatbJ$, 
            $\bT^0=\bU^0=\bV^0=\textbf{0}$
        \While {$\barbQ_p(\barbX_p,\barbY_p)\nprec0$, $\hatbQ(\hatbX,\bY)\neq\textbf{0},k<$MaxIter}
            \State $k\gets k+1$
            \State Find $\bX_i^k,\bZ_p^k,\bW^k$ by \cref{eq:X Chordal proj,eq:Z Chordal proj,eq:W Chordal proj} in parallel
            \State Find $\bJ_\alpha^k$ by \cref{eq:J Chordal update} in parallel
            \State Find $\bT_i^k,\bU_p^k,\bV^k$ by \cref{eq:T Chordal update,eq:U Chordal update,eq:V Chordal update} in parallel
        \EndWhile
        \If {$k<$MaxIter}
            \State Multi-agent system is stable with $\bX^k=\diag(\bX_i^k)_{i\in\bbN_N}$
        \EndIf
    \end{algorithmic}
\end{algorithm}

Like \cref{alg:01}, any initial point where $\bX^0=\tilbJ^0$, $\bZ^0=\barbJ$, $\bW^0=\hatbJ^0$, and $\bT^0=\bU^0=\bV^0=\bzero$ can be used to initialize the algorithm, and a natural choice is $\bJ^0=\diag(\diag(\bX_i^0)_{i\in\bbN_N},\bI^\bY)$, where $\mathbf{X}^0_i\in\bbP_i$ and $\bI^\bY$ is an identity matrix of the same dimension as $\bY$ in \cref{tab:Chordal Decomposition}. \Cref{alg:02} converges to a feasible point of \cref{opt:Main Chordal Problem} if a feasible point exists due to \cite[Appx.]{boyd2011distributed}. Conversely, if the algorithm does not converge, then a feasible point does not exist, meaning the \gls{vndt} does not guarantee the stability of the network.

The computation time for each iteration is given by $t_k=t_k^P+t_k^J+t_k^D$, where $t_k^P=\max_{i\in\bbN_N,j\in\bbN_M}(t_k^{\bX_i},t_k^{\bZ_j},t_k^{\bW})$, $t_k^J=\max_{a\in\bbN_{4N+y}}(t_k^{\bJ_a})$, $t_k^D=\max_{i\in\bbN_N,j\in\bbN_M}(t_k^{\bT_i},t_k^{\bU_j},t_k^{\bV})$. Since all of the projection processes can be executed in parallel, the calculation time is dramatically reduced compared to \cref{alg:01}, although more iterations may be required for convergence. This advantage will be further amplified in larger systems. The numerical example provided in \cref{chap:Numerical Example} will demonstrate the calculation time reduction using this algorithm.

\subsection{Extension to Nonlinear Systems}
If $\mathscr{G}_i$ is not \gls{lti} but has some structured nonlinearity, a variation of \cref{lem:KYP Lemma} may be applied with adjusted definitions of $\bX_i$ and $\bbP_i$. For instance, if $\mathscr{G}_i$ is \gls{lti} with polytopic uncertainty, then $\bX_i$ remains the same, and $\bbP_i = \{\bX_i \,|\, \bX_i \text{ satisfies \cite[Equation 18]{walsh2019interior}}\}$. Alternatively, if $\mathscr{G}_i$ is \gls{lti} with a constant state delay, then $\bX_i$ is augmented with some additional variables, and $\bbP_i = \{\bX_i \,|\, \bX_i \text{ satisfies \cite[Equation 5]{li2002delay}}\}$. Critically, $\tilbbQ$ only depends on $\bQ_i$, $\bS_i$, and $\bR_i$, so these variations in $\bX_i$ and $\bbP_i$ do not affect \cref{opt:Main Problem: Stability,opt:Main Chordal Problem: Stability,opt:Main Chordal Problem: Equality}. Therefore, the results extend to these and other structured nonlinear cases.

\section{NUMERICAL EXAMPLE} \label{chap:Numerical Example}
\begin{figure}
    \centering
    \includegraphics[width = 0.45\textwidth]{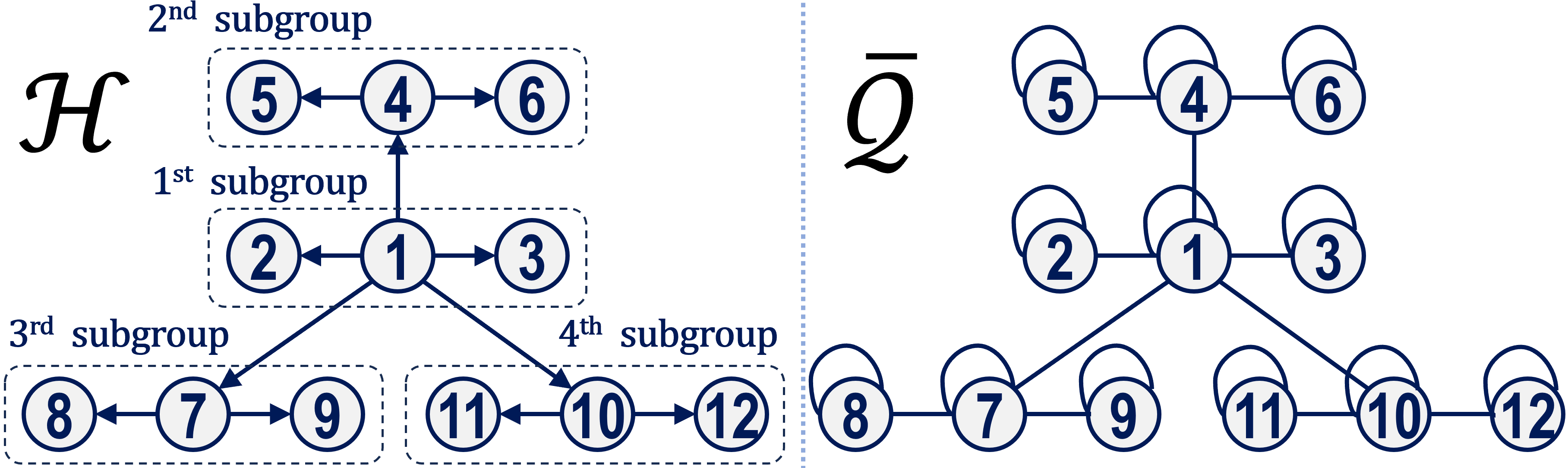}
    \caption{Graph network of \gls{uav}.}
    \label{fig:Graph}
    \vspace*{-1.25\baselineskip} 
\end{figure}

The stability of a 2D swarm of \glspl{uav} is analyzed using \cref{alg:01,alg:02} to demonstrate the proposed approach. The system consists of 4 subgroups, each containing 3 \glspl{uav}. \Cref{fig:Graph} illustrates the network $\cH$ and $\barcQ$. The graph $\barcQ$ features 11 maximal cliques, resulting in 11 \glspl{lmi}. The dynamics of each subgroup, $i\in\bbN_4$, is given by
\setlength{\arraycolsep}{0pt}
\begin{align*}
    \dot{\bx}_i
    &{=}{\begin{bmatrix}
        \bA_{cl} & 0 & 0 \\
        \bB\bK & \bA_{cl} & 0 \\
        \bB\bK & 0 & \bA_{cl}
    \end{bmatrix}}
    \bx_i {+}(\bB\bK\otimes\bI_3)\bx_i^d
    {+}(1-\delta_1(i))\begin{bmatrix}
        \bB\bK & 0 & 0 \\
        0 & 0 & 0 \\
        0 & 0 & 0
    \end{bmatrix}\bx_1
\end{align*}
where $\bA_{cl}=\bA{-}\bB\bK$, $\delta_1(1)=1$, $\delta(i)=0$ if $i\neq1$,
\setlength{\arraycolsep}{2pt}
\begin{align*}
    \bA&= \begin{bmatrix}  0_{3\times 2} & 0_{3\times 1} & I_3 \\ 0_{1\times 2} & -g & 0_{1\times 3} \\ 0_{2\times 2} & 0_{2\times 1} & 0_{2\times 3}\end{bmatrix},
    \bB=\begin{bmatrix}
        0_{4\times 1} & 0_{4\times 1} \\
        \frac{1}{m} & \frac{1}{m} \\
        -\frac{l}{I_{xx}} & \frac{l}{I_{xx}}
    \end{bmatrix},\\
    \bK&=\begin{bmatrix*}[r]
        7.07 & 7.07 & -49.00 & 8.70 & 5.12 & -15.81 \\
        -7.07 & 7.07 & 49.00 & -8.70 & 5.12 & 15.81 
    \end{bmatrix*},
\end{align*}
$g$ is gravitational acceleration, $\otimes$ is Kronecker product, $\bx_i\in\bbR^{18}$ is the states, $\bx_i^d$ is the desired position input. The system and input matrices $\bA$ and $\bB$ are derived from 2D \gls{uav} dynamics, with mass $m=3$ kg, a moment of inertia $I_{xx}=1$ kg-m$^2$, and wing length $l=0.2$ m. The gain matrix $\bK$ is computed using a linear-quadratic regulator controller.

\cref{alg:01,alg:02} were applied with an initial point $\bJ^0=\diag(\diag(\bX_i^0)_{i\in\bbN_{12}},\bI_6\otimes\bI_{14})$ for $\mathbf{X}^0_i\in\bbP_i$. 
The initial points for $\bX$, $\bZ$, $\bW$, $\bT$, $\bU$, and $\bV$ followed the initialization rules in \cref{chap:Main Results}. Both algorithms found the feasible $\bX_i$ with $\barbQ(\bX)\prec0$ without sharing the dynamics information.

\begin{figure}
\centering
    \begin{subfigure}[t]{0.23\textwidth}
        \centering
        \resizebox{\textwidth}{!}{
%
%
\definecolor{mycolor1}{rgb}{0.00000,0.44700,0.74100}%
\definecolor{mycolor2}{rgb}{0.85000,0.32500,0.09800}%

\begin{tikzpicture}

\begin{axis}[%
width=5in,
height=2.5in,
at={(0.758in,0.509in)},
scale only axis,
bar shift auto,
xmin=1.5,
xmax=13.5,
xtick={3, 6, 9, 12},
xlabel style={font=\color{white!15!black}},
xlabel={Number of UAVs},
ymin=0,
ymax=3000,
ylabel style={at={(axis description cs:-0.125,.5)},font=\color{white!15!black}},
ylabel={Iterations},
axis background/.style={fill=white},
title style={font=\bfseries},
title={\textbf{Iterations Required}},
xmajorgrids,
ymajorgrids,
legend style={at={(0.03,0.97)}, anchor=north west, legend cell align=left, align=left, draw=white!15!black},
title style={font=\Huge},xlabel style={font={\Huge}},ylabel style={font=\Huge},legend style={font=\Huge},tick label style={font=\LARGE}
]
\addplot[ybar, bar width=1, fill=green, draw=black, area legend] table[row sep=crcr] {%
3	218\\
6	752\\
9	1002\\
12	1226\\
};
\addlegendentry{alg 1}

\addplot[ybar, bar width=1, fill=yellow, postaction={
        pattern=north east lines
    }, draw=black, area legend] table[row sep=crcr] {%
3	430\\
6	1746\\
9	2136\\
12	2760\\
};
\addlegendentry{alg 2}

\end{axis}
\end{tikzpicture}
    \end{subfigure}
    \begin{subfigure}[t]{0.245\textwidth}
        \centering
        \resizebox{\textwidth}{!}{
%
%
\definecolor{mycolor1}{rgb}{0.00000,0.44700,0.74100}%
\definecolor{mycolor2}{rgb}{0.85000,0.32500,0.09800}%
\begin{tikzpicture}

\begin{axis}[%
width=5in,
height=2.5in,
at={(0.758in,0.509in)},
scale only axis,
bar shift auto,
xmin=1.5,
xmax=13.5,
xtick={3,6,9,12},
xlabel style={font=\color{white!15!black}},
xlabel={Number of UAVs},
ymin=0,
ymax=150,
ylabel style={at={(axis description cs:-0.075,.5)},font=\color{white!15!black}},
ylabel={Time [ms]},
axis background/.style={fill=white},
title style={font=\bfseries},
title={\textbf{Average Computation Time per Iteration}},
xmajorgrids,
ymajorgrids,
legend style={at={(0.03,0.97)}, anchor=north west, legend cell align=left, align=left, draw=white!15!black},
title style={font=\Huge},xlabel style={font={\Huge}},ylabel style={font=\Huge},legend style={font=\Huge},tick label style={font=\LARGE}
]
\addplot[ybar, bar width=1, fill=green, draw=black, area legend] table[row sep=crcr] {%
3	14.5891\\
6	41.1476\\
9	82.8050\\
12	143.5704\\
};
\addlegendentry{alg 1}

\addplot[ybar, bar width=1, fill=yellow, postaction={
        pattern=north east lines
    }, draw=black, area legend] table[row sep=crcr] {%
3	6.0073\\
6	6.5142\\
9	6.6539\\
12	6.8428\\
};
\addlegendentry{alg 2}

\end{axis}
\end{tikzpicture}
    \end{subfigure}
    \begin{subfigure}[t]{0.23\textwidth}
        \centering
        \resizebox{\textwidth}{!}{
%
%
\definecolor{mycolor1}{rgb}{0.00000,0.44700,0.74100}%
\definecolor{mycolor2}{rgb}{0.85000,0.32500,0.09800}%
\begin{tikzpicture}

\begin{axis}[%
width=5in,
height=2.5in,
at={(0.758in,0.509in)},
scale only axis,
bar shift auto,
xmin=1.5,
xmax=13.5,
xtick={3, 6, 9, 12},
xlabel style={font=\color{white!15!black}},
xlabel={Number of UAVs},
ymin=0,
ymax=180,
ylabel style={font=\color{white!15!black}},
ylabel={Time [s]},
axis background/.style={fill=white},
title style={font=\bfseries},
title={\textbf{Total Computation Time}},
xmajorgrids,
ymajorgrids,
legend style={at={(0.03,0.97)}, anchor=north west, legend cell align=left, align=left, draw=white!15!black},
title style={font=\Huge},xlabel style={font={\Huge}},ylabel style={font=\Huge},legend style={font=\Huge},tick label style={font=\LARGE}
]
\addplot[ybar, bar width=1, fill=green, draw=black, area legend] table[row sep=crcr] {%
3	3.1804\\
6	30.9430\\
9	84.4611\\
12	176.0173 \\
};
\addlegendentry{alg 1}

\addplot[ybar, bar width=1, fill=yellow, postaction={
        pattern=north east lines
    }, draw=black, area legend] table[row sep=crcr] {%
3	2.5832\\
6	11.3738\\
9	14.2128\\
12	18.8862\\
};
\addlegendentry{alg 2}

\end{axis}
\end{tikzpicture}
    \end{subfigure}
    \caption{Computation time of \Cref{alg:01,alg:02}.} \label{fig:Time Results}
    \vspace*{-1.25\baselineskip} 
\end{figure}

\cref{fig:Time Results} shows the computation time of the results. Additional tests with one, two, and three subgroups were conducted to compare the computation time trends of both algorithms. In this example, only the calculation time of the projection operator was measured, since it significantly exceeds that of other operations.

For this example, the centralized semidefinite program is faster than \cref{alg:01,alg:02}. However, it forces agents to share dynamics information. \cref{alg:01,alg:02} offer a way to solve \cref{opt:Main Problem,opt:Main Chordal Problem} without sharing their dynamics, which is critical for some networked systems.

For smaller systems, the time efficiency of using chordal decomposition does not stand out. However, for larger systems, it can dramatically reduce total computation time, despite the increase in iterations for \cref{alg:02} compared to \cref{alg:01}. This efficiency stems from the parallel projections of \cref{eq:X Chordal proj,eq:Z Chordal proj,eq:W Chordal proj}, and \cref{eq:T Chordal update,eq:U Chordal update,eq:V Chordal update} in \cref{alg:02}. Moreover, the calculation time for \cref{eq:Z proj} is much longer than for \cref{eq:Z Chordal proj} or \cref{eq:W Chordal proj} due to its larger size.

The average computation time for \cref{alg:02} remains relatively consistent across different subgroup numbers, unlike \cref{alg:01}, which shows a proportional increase with the number of subgroups. This consistency in \cref{alg:02} is because chordal decomposition produces the same size \glspl{lmi} in this example, enabling parallel projection to maintain nearly identical average computation time across different subgroup configurations.

\section{CONCLUSIONS} \label{chap:Conclusion}

This paper presents a method analyzing multi-agent system stability without sharing each agent's dynamics. The proposed approach allows each agent to independently analyze its dissipativity while maintaining overall system stability. In addition, the second algorithm significantly reduces the computational burden. The feasibility of these algorithms was demonstrated using a 2D swarm \gls{uav} system. The results show that all approaches successfully proved system stability without sharing its dynamics matrices. Moreover, the efficiency of combining chordal decomposition with \gls{admm} was shown for larger-scale networks. Future work will focus on expanding the distributed stability analysis to enable distributed controller syntheses, ensuring privacy by preventing the exchange of agent's private information.

\addtolength{\textheight}{-12cm}   







\bibliographystyle{IEEEtran}
\bibliography{MyBib}{}

@STRING{IEEE_J_AC         = "{IEEE} Tran. Aut. Ctrl."}

@STRING{J_FI         = "J. Franklin Inst."}

@STRING{IEEE_P         = "Proc. {IEEE}"}

@STRING{Eu_JC         = "Eu. J. Ctrl."}

@STRING{ECCE         = "ICPE 2023-ECCE Asia"}

@STRING{FT_SC         = "F. T. Sys. Ctrl."}

@STRING{FT_ML         = "F. T. Mach."}

@STRING{ARC         = "Ann. Rev. Ctrl."}

@STRING{A_R_M_A         = "Arch. rat. mech. analy."}

@STRING{IJRNC         = "I. J. Robu. Nonl. Ctrl."}

@STRING{ARXIV         = "ArXiv"}

@book{boyd2004convex,
  title={Convex optimization},
  author={Boyd, Stephen and Vandenberghe, Lieven},
  year={2004},
  publisher={Cambridge university press}
}

@book{diestel2024graph,
  title={Graph theory},
  author={Diestel, Reinhard},
  year={2024},
  publisher={Springer (print edition); Reinhard Diestel (eBooks)}
}

@article{sztipanovits2011toward,
  title={Toward a science of cyber--physical system integration},
  author={Sztipanovits, Janos and Koutsoukos, Xenofon and Karsai, Gabor and Kottenstette, Nicholas and Antsaklis, Panos and Gupta, Vijay and Goodwine, Bill and Baras, John and Wang, Shige},
  journal=IEEE_P,
  volume={100},
  number={1},
  pages={29--44},
  year={2011},
  publisher={IEEE}
}

@book{vidyasagar1981input,
  title={Input-output analysis of large-scale interconnected systems: decomposition, well-posedness and stability},
  author={Vidyasagar, Mathukumalli},
  year={1981},
  publisher={Springer}
}

@article{zheng2021chordal,
  title={Chordal and factor-width decompositions for scalable semidefinite and polynomial optimization},
  author={Zheng, Yang and Fantuzzi, Giovanni and Papachristodoulou, Antonis},
  journal=ARC,
  volume={52},
  pages={243--279},
  year={2021},
  publisher={Elsevier}
}

@article{boyd2011distributed,
  title={Distributed optimization and statistical learning via the alternating direction method of multipliers},
  author={Boyd, Stephen and Parikh, Neal and Chu, Eric and Peleato, Borja and Eckstein, Jonathan and others},
  journal=FT_ML,
  volume={3},
  number={1},
  pages={1--122},
  year={2011},
  publisher={Now Publishers, Inc.}
}

@article{agarwal2020distributed,
  title={Distributed synthesis of local controllers for networked systems with arbitrary interconnection topologies},
  author={Agarwal, Etika and Sivaranjani, S and Gupta, Vijay and Antsaklis, Panos J},
  journal=IEEE_J_AC,
  volume={66},
  number={2},
  pages={683--698},
  year={2020},
  publisher={IEEE}
}

@book{nesterov2013introductory,
  title={Introductory lectures on convex optimization: A basic course},
  author={Nesterov, Yurii},
  volume={87},
  year={2013},
  publisher={Springer Science \& Business Media}
}

@inproceedings{isik2023impact,
  title={Impact of Centralized and Distributed Control Structures on the Harmonic Stability of Modular Multilevel Converter Based on DQ Reference Frame Impedance Assessment},
  author={Isik, Semih and Bhattacharya, Subhashish},
  booktitle=ECCE,
  pages={811--816},
  year={2023},
  organization={IEEE}
}

@article{willems1972dissipative,
  title={Dissipative dynamical systems part I: General theory},
  author={Willems, Jan C},
  journal=A_R_M_A,
  volume={45},
  number={5},
  pages={321--351},
  year={1972},
  publisher={Springer}
}

@article{walsh2019interior,
  title={Interior-conic polytopic systems analysis and control},
  author={Walsh, Alex and Forbes, James Richard},
  journal=ARXIV,
  year={2019}
}

@article{hill1977stability,
  title={Stability results for nonlinear feedback systems},
  author={Hill, David J and Moylan, Peter James},
  journal={Automatica},
  volume={13},
  number={4},
  pages={377--382},
  year={1977},
  publisher={Elsevier}
}

@techreport{gupta1996robust,
  title={Robust stabilization of uncertain systems based on energy dissipation concepts},
  author={Gupta, Sandeep},
  year={1996},
  institution={NASA}
}

@article{li2002delay,
  title={Delay-dependent dissipative control for linear time-delay systems},
  author={Li, Zhihu and Wang, Jingcheng and Shao, Huihe},
  journal=J_FI,
  volume={339},
  number={6-7},
  pages={529--542},
  year={2002},
  publisher={Elsevier}
}

@article{mahmoud2009dissipativity,
  title={Dissipativity analysis and synthesis of a class of nonlinear systems with time-varying delays},
  author={Mahmoud, Magdi S and Shi, Yan and Al-Sunni, Fouad M},
  journal=J_FI,
  volume={346},
  number={6},
  pages={570--592},
  year={2009},
  publisher={Elsevier}
}

@article{haddad2022dissipativity,
  title={Dissipativity theory for discrete-time nonlinear stochastic dynamical systems},
  author={Haddad, Wassim M and Lanchares, Manuel},
  journal=IJRNC,
  volume={32},
  number={11},
  pages={6293--6314},
  year={2022},
  publisher={Wiley Online Library}
}

@article{zames1966input1,
  title={On the input-output stability of time-varying nonlinear feedback systems part one: Conditions derived using concepts of loop gain, conicity, and positivity},
  author={Zames, George},
  journal=IEEE_J_AC,
  volume={11},
  number={2},
  pages={228--238},
  year={1966},
  publisher={IEEE}
}

@book{hatanaka2015passivity,
  title={Passivity-based control and estimation in networked robotics},
  author={Hatanaka, Takeshi and Chopra, Nikhil and Fujita, Masayuki and Spong, Mark W},
  year={2015},
  publisher={Springer}
}

@article{jovanovic2016controller,
  title={Controller architectures: Tradeoffs between performance and structure},
  author={Jovanovi{\'c}, Mihailo R and Dhingra, Neil K},
  journal=Eu_JC,
  volume={30},
  pages={76--91},
  year={2016},
  publisher={Elsevier}
}

@book{bertsekas2015parallel,
  title={Parallel and distributed computation: numerical methods},
  author={Bertsekas, Dimitri and Tsitsiklis, John},
  year={2015},
  publisher={Athena Scientific}
}

@article{tsitsiklis1986distributed,
  title={Distributed asynchronous deterministic and stochastic gradient optimization algorithms},
  author={Tsitsiklis, John and Bertsekas, Dimitri and Athans, Michael},
  journal=IEEE_J_AC,
  volume={31},
  number={9},
  pages={803--812},
  year={1986},
  publisher={IEEE}
}

@article{nedich2015convergence,
  title={Convergence rate of distributed averaging dynamics and optimization in networks},
  author={Nedich, Angelia and others},
  journal=FT_SC,
  volume={2},
  number={1},
  pages={1--100},
  year={2015},
  publisher={Now Publishers, Inc.}
}

@article{nedic2018network,
  title={Network topology and communication-computation tradeoffs in decentralized optimization},
  author={Nedi{\'c}, Angelia and Olshevsky, Alex and Rabbat, Michael G},
  journal=IEEE_P,
  volume={106},
  number={5},
  pages={953--976},
  year={2018},
  publisher={IEEE}
}

@article{yang2019survey,
  title={A survey of distributed optimization},
  author={Yang, Tao and Yi, Xinlei and Wu, Junfeng and Yuan, Ye and Wu, Di and Meng, Ziyang and Hong, Yiguang and Wang, Hong and Lin, Zongli and Johansson, Karl H},
  journal=ARC,
  volume={47},
  pages={278--305},
  year={2019},
  publisher={Elsevier}
}

@article{liu2017constrained,
  title={Constrained consensus algorithms with fixed step size for distributed convex optimization over multiagent networks},
  author={Liu, Qingshan and Yang, Shaofu and Hong, Yiguang},
  journal=IEEE_J_AC,
  volume={62},
  number={8},
  pages={4259--4265},
  year={2017},
  publisher={IEEE}
}

\end{document}